\newcommand{\dif}{\mathrm{d}} 
\newtheorem{thm}{Theorem}
\newtheorem{prop}[thm]{Proposition}
\author{
%
\alignauthor Chaojie Gu $\qquad$ Linshan Jiang $\qquad$ Rui Tan \\
    \affaddr{School of Computer Science and Engineering}\\
    \affaddr{Nanyang Techonological University, Singapore}\\
    \email{gucj@ntu.edu.sg, linshan001@e.ntu.edu.sg, tanrui@ntu.edu.sg}
}
\title{LoRa-Based Localization: Opportunities and Challenges}
\begin{document}
\maketitle

\begin{abstract}
Low-power wide-area network (LPWAN) technologies featuring long-range communication capability and low power consumption will be important for forming the Internet of Things (IoT) consisting of many geographically distributed objects. Among various appearing LPWAN technologies, LoRa has received the most research attention due to its open specifications and gateway infrastructures unlike the closed designs and/or managed gateway infrastructures of other LPWAN technologies. While existing studies on LoRa has focused on network connectivity and performance, accurate positioning of LoRa end devices is still largely an open issue. In this paper, we discuss and analyze the physical layer features of LoRa that are relevant to localization. Our discussions and analysis illustrate the opportunities and challenges in implementing LoRa-based localization.
\end{abstract}

%
%

%

\section{Introduction}

Low-power wide-area networks (LPWANs) are an emerging wireless platform that aims to sustain power-constrained end devices (e.g., those based on batteries or energy harvesting) to operate for years while communicating at low data rates to gateways several kilometers away. LPWAN technologies will largely increase the degree of connectivity of Internet of Things (IoT) and enable deep penetration of IoT objects into the urban territories. Fig.~\ref{fig:distance_energy} illustrates the comparisons among various wireless technologies in terms of radio power consumption and communication ranges. From the figure, LPWANs (e.g., LoRaWAN \cite{lora-alliance}, Sigfox \cite{sigfox}, Weightless-P \cite{weightless}, and NB-IoT \cite{nb-iot}) form an important pole in the spectrum of radio power consumption versus communication range.

\begin{figure}
  \includegraphics[width=\columnwidth]{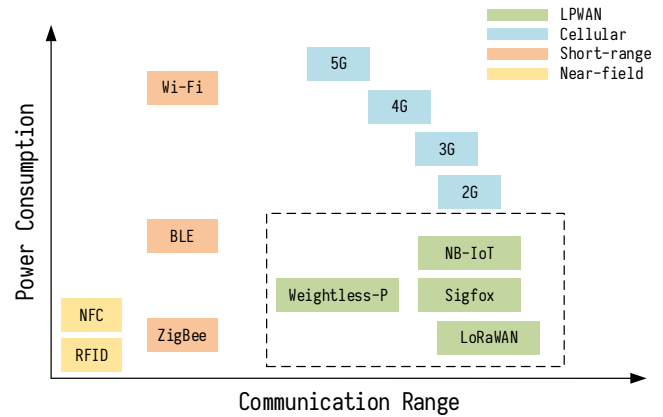}
    \caption{Power consumption versus communication range for various radios.}
    \label{fig:distance_energy}
\end{figure}

Given LPWAN signals' capability of propagating over long distances and penetrating civil infrastructures, exploiting LPWAN signals for localizing IoT objects over long distances and/or in indoor environments has attracted increasing research interests. In this paper, we discuss the opportunities and challenges of LoRa-based localization. LoRa is a physical layer technique that uses a Chirp Spread Spectrum (CSS) modulation, whereas LoRaWAN is an open data link layer specification based on LoRa. Our focus of studying LoRa for localization is due to its use of license-free ISM band (e.g., EU868 MHz and US916 MHz), low cost (US\$15 per unit \cite{low-cost}), and scalability to support many IoT objects. Compared with other wireless-based solutions, the LoRa-based localization will offer the following advantages:
\begin{itemize}
\item If an infrastructure-based localization approach is considered, the LoRa infrastructures will be much simpler than other wireless infrastructures. Existing approaches based on short-range radios, e.g., Wi-Fi, Bluetooth, and ultra-wideband (UWB), often require dense deployment of access points (APs) and beacon nodes, leading to high installation and maintenance costs. In contrast, due to LoRa's long-range communication capability, a small number of LoRa gateways can cover a geographic area (e.g., a campus) or a civil construction (e.g., a building) and act as anchors to support the localization of LoRa end devices. Thus, the deployment of the infrastructure will not incur significant overhead.
\item Due to LoRa's low power consumption and good scalability, the LoRa-based localization solutions can be applied to massive objects (e.g., goods in a warehouse). High-power radios (e.g., Wi-Fi and cellular) are ill-suited for such scenarios, since they will drain the battery energy quickly. The RFID and NFC solutions, which rely on the reader's scanning, cannot achieve real-time object localization.\vspace{0.5em}
\end{itemize}

Despite the above desirable features of LoRa-based localization, we also face the following key technical challenges:
\begin{itemize}
\item We aim to develop solutions that use customized gateways to localize off-the-shelf LoRa end devices without any hardware customization and retrofitting. As most commercial LoRa end devices do not provide access to the LoRa physical layer, it is challenging and even impossible to obtain low-level information that is often critical to localization.
\item After penetrating barriers, the LoRa signals often go below the noise floor. Although the LoRa's CSS modulation is designed to deal with significant signal attenuation, the localization approaches based on received signal strength (RSS) and RSS indicator (RSSI) will be vulnerable to low signal-to-noise ratios (SNRs). Thus, the RSS-based approaches are ill-suited for LoRa-based localization.
\item As the CSS-modulated LoRa signal is a narrowband signal, it cannot be very sharp in the time domain. Its smooth time-domain waveform will make accurately timing the arrivals of the LoRa signals at the gateway difficult, rendering the time-of-arrival (TOA) approaches futile.
\item To pursue using phase information of the LoRa signal for localization, the frequency variation of CSS makes the estimation of signal phase challenging. In addition, the low-cost LoRa end devices often have considerable frequency biases, especially when deployed in the environments with time-varying conditions (e.g., varying ambient temperature and humidity). The multipath propagation of the LoRa signal may also affect negatively the estimation of the signal phase.\vspace{0.5em}
\end{itemize}

This paper will discuss in detail the challenges faced by a few possible solutions of localizing LoRa end devices. We also provide a formulation of the phase-based approach, which we believe is the most promising solution. The preliminary analysis on the phase-based approach suggests several technical challenges that we need to address in our future work.

The remainder of this paper is organized as follows. Section~\ref{sec:related} reviews related work. Section~\ref{sec:primer} presents the basics of LoRa. Section~\ref{sec:challenges} discusses and analyzes the challenges faced by a few possible approaches. Section~\ref{sec:conclude} concludes this paper.
\section{Related Work}
\label{sec:related}

Using LPWAN radios such as LoRa to implement localization has received increasing research interest. Existing studies on LoRa-based localization can be broadly divided into two classes of time-difference-of-arrivals (TDOA)-based and RSSI-based approaches.

\subsection{TDOA-Based Approaches}
TDOA determines the position of a certain device by the differences among the time instants that the same signal arrives at multiple gateways. Fargas et al.~\cite{fargas2017gps} used timestamps provided by the LoRa gateways to implement a TDOA system. Carvalho et al.~\cite{carvalho2018feasibility} evaluated the feasibility of implementing mobile sensing and tracking applications using LoRa radios. Podevijn et al.~\cite{podevijn2018tdoa} also implemented a TDOA system based on the timestamps extracted from the LoRa gateways to evaluate the tracking performance of a LoRa network. The approaches developed in the above three studies have poor localization performance. This is mainly because that current LoRa hardware implementation and the software stack have a timestamp resolution of one microsecond ($\mu\text{s}$) only. As radio signals travel about 300 meters in free space over a time duration of $1\,\mu\text{s}$, the timing resolution of current off-the-shelf LoRa products is not sufficient for implementing accurate localization~\cite{dongare2017openchirp}. Although there are several signal processing approaches~\cite{bakkali2017kalman, wolf2018improved} such as Kalman filter to improve the robustness against random noises, the timestamp resolution is still the determining factor for the localization performance of TDOA systems.

Rajalakshmi et al.~\cite{nandakumar20183d} designed a multi-band backscatter device based on CSS modulation for three-dimensional localization. The backscatter device is in sub-centimeter form factor and only consumes $93\,\mu\text{W}$ power. The projected lifetime of the device is up to ten years on button cell batteries. The above study achieves meter-level localization accuracy with highly customized LoRa devices. In contrast, we aim to develop localization approaches for off-the-shelf LoRa devices.

\subsection{RSSI-Based Approaches}
Another research thread has focused on RSSI-based localization. Different from TDOA, in an RSSI-based ranging algorithm, a node uses RSSI measurements to estimate its distance from the signal source by using a known signal propagation model that characterizes the relationship between RSSI and distance. Lam et al.~\cite{lam2017lora} described an RSSI-based LoRa localization algorithm in noisy outdoor environments. Accounting for the possible errors in the RSSI measurements, their approach selects the best distance estimate among all estimates. Furthermore, they tried to exclude noisy nodes and select the nodes experiencing lower levels of noises for localization~\cite{lam2018new}. Machine learning approaches have been studied recently. Aernouts et al.~\cite{aernouts2018sigfox} proposed to apply \textit{k}-Nearest-Neighbor (\textit{k}NN) algorithm on the collected RSSI measurements to estimate the object location. Zhe et al.~\cite{he2018enhanced} modeled the RSSI in both indoor and outdoor environments using Gaussian process and then applied maximum likelihood estimation (MLE) to develop a localization approach.

\subsection{Summary}
The existing TDOA-based and RSSI-based approaches have several limitations. Most TDOA approaches based on commodity LoRa devices can only achieve sub-kilometer accuracy. Such accuracy is insufficient for a range of applications, such as unmanned aerial vehicle (UAV) tracking and navigation. Several TDOA approaches require pre-trained models and/or highly customized devices, leading to overhead and high cost in deploying these approaches. The RSSI-based approaches are susceptible to strong signal attenuation and cannot achieve three-dimensional localization well due to the multipath effect.
\section{LoRa Primer}
\label{sec:primer}

LoRa is a physical layer technique that uses a Chirp Spread Spectrum (CSS) modulation and operates in sub-GHz ISM bands (e.g., $868\,\text{MHz}$ in Europe). In LoRa's CSS, a chirp is a finite-time signal with time-varying instantaneous frequency that swaps the whole bandwidth of the communication channel in a linear manner. Given a certain central frequency, denoted by $f_c$, an up chirp's instantaneous frequency increases from $f_c-\frac{BW}{2}$ to $f_c+\frac{BW}{2}$, whereas a down chirp's instantaneous frequency decreases from $f_c+\frac{BW}{2}$ to $f_c-\frac{BW}{2}$. The time duration of a chirp is determined by the \textit{spreading factor} and \textit{bandwidth}, which are denoted by $SF$ and $BW$, respectively. Specifically, the chirp time is given by
\begin{equation}
    t = \frac{2^{SF}}{BW}.
\end{equation}
For the EU868 frequency band, there are six spreading factors, ranging from $7$ to $12$. For example, with $SF = 7$, $BW = 125\,\text{kHz}$, $f_c = 869.75\,\text{MHz}$ and initial phase $\theta = 0$, an up chirp's spectrogram is shown in Fig.~\ref{fig:spetrogram}. Fig.~\ref{fig:3D-preamble} presents the in-phase (I) and quadrature (Q) data of this chirp in the time domain.

\begin{figure}[t]
  \includegraphics[width=\columnwidth]{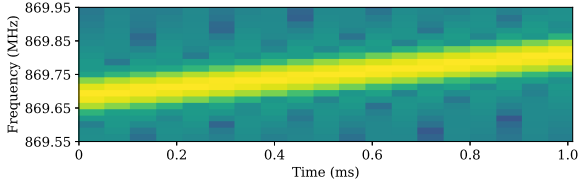}
    \caption{Spectrogram of an up chirp.}
    \label{fig:spetrogram}
  \end{figure}
  
\begin{figure}[t]
  \includegraphics[width=\columnwidth]{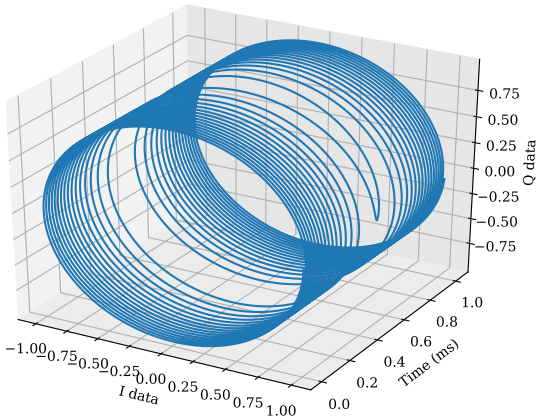}
    \caption{$I$ and $Q$ data $(\theta = 0)$ of an up chirp.}
    \label{fig:3D-preamble}
\end{figure}

The LoRaWAN specification defines three classes of end devices, Class-A, -B, and -C. Class-A devices are the most energy efficient because each communication session must be initialized by an end device rather than the gateway. Therefore, the end devices can follow their own communication schedules with low duty cycles. After an uplink initiated by an end device, there are two optional downlink windows for the gateway to transmit a frame to the end device. A Class-B device will synchronize its internal clock using beacons emitted by the gateway. During the synchronization process, the gateway can schedule the data transmissions and synchronization interval with the end devices. Thus, each Class-B device will use the allocated time slots to transmit uplink frames. Class-C devices will listen to the gateway continuously, which is power-consuming. All off-the-shelf LoRa end devices support the Class-A specification. Class-A operation, though energy efficient, may have a chance of frame collision if two end devices using the same spreading factor transmit at the same time. However, since the duty cycle is generally low, the frame collisions are of low probability. Class-B and -C are free of frame collisions.

We aim to develop an approach that localizes an end device based on its uplink transmissions. The approach will be applicable to all three classes, assuming there are no frame collisions.

\section{Approaches to LoRa-Based Localization and Challenges}
\label{sec:challenges}

Various techniques have been developed for localization using radio signals. In this section, we discuss the challenges of implementing several main techniques for LoRa-based localization.

\subsection{RSSI-Based Approach}
RSSI-based approach builds a path-loss model according to the measured RSSIs at different distances from the signal source. With the model, the distance between the transmitter and the receiver can be estimated. For a static environment, the RSSI-based localization approach can achieve accuracy of hundreds of meters and down to tens of meters. However, radio channels are often subjected to various stochastic and unpredictable factors, especially in the indoor environments. It is generally hard to build an effective path-loss model when the signal travels through walls and floors constructed with diverse materials. Moreover, after penetrating the walls and floors, the LoRa signal strength may go below the noise floor, leading to large errors of the RSSI-based approach.

\begin{figure}
  \includegraphics{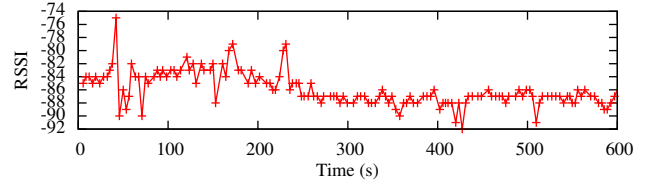}
    \caption{Fluctuations of RSSI within 10 minutes.}
    \label{fig:rssi}
  \end{figure}

According to the datasheet of Semtech SX1276, which is a major commodity LoRa chip on the current market, the RSSI reading is an average value of multiple instantaneous RSSIs. This average RSSI is saved in a register that can be read at any time. We conduct an experiment to show the variation of this average RSSI over time. We set up two nodes as the transmitter and the receiver respectively in an office. Each node consists of an Arduino UNO and an SX1276-based inAir9b LoRa chip. The distance between the transmitter and the receiver is about $10\,\text{m}$. Both of them use $SF = 12$ and $f_c = 859.75\,\text{MHz}$. The transmitter randomly chooses a transmitting interval within 2 to 3 seconds. Fig.~\ref{fig:rssi} shows the average RSSI readings of the receiver over 10 minutes. We can see that the RSSI fluctuates even when both nodes stay still. Such fluctuations challenge the design of the RSSI-based localization approaches.

\subsection{TDOA-Based Approach}
A TDOA-based localization system has multiple base stations with known locations. We also assume that the clocks of the base stations are tightly synchronized. A transmitter transmits a signal that will be captured by all base stations. For each two base stations, the difference between their signal arrival times can be measured. Now, our discussion focuses on two base stations. With the product of the signal propagation speed and the time difference as the real axis, we can get a hyperbola. The location of the transmitter is at the intersection point of the hyperbolas. Thus, the accuracy of timestamping the signal's arrival time is critical to TDOA-based localization approach.

There are mainly three existing approaches to extract the timestamp of an incoming LoRa signal, including reading the timestamp provided by the LoRaWAN gateway, estimating the arrival time in the spectrogram domain and estimating the arrival time in the time domain. All three approaches achieve microsecond level timestamping accuracy only, which is insufficient for meter-level ranging and localization. To achieve meter-level ranging and localization, nanosecond timestamping accuracy will be needed. We now discuss the details of the three possible timestamping methods for LoRa.

\begin{itemize}
\item \textbf{Timestamp given by LoRaWAN gateway:} When a gateway receives a LoRa frame, it will forward it to the LoRaWAN software server with a timestamp. The timestamp is the value of the gateway's internal time counter at the time instant when the LoRa frame was received. However, most commodity LoRaWAN gateways provide time counters with microsecond granularity only \cite{protocol2018}. Thus, the frame arrival timestamping will have microsecond-level accuracy only.
\item \textbf{Spectrogram domain timestamping:} As the up chirp exhibits a clear time-frequency pattern as shown in Fig.~\ref{fig:spetrogram}, a possible approach to locating the signal arrival time is to analyze the spectrogram of the received $I$ and $Q$ data. However, the spectrogram inevitably has reduced time resolution. For instance, the time resolution of the spectrogram in Fig.~\ref{fig:spetrogram} is $1024\,\mu\,\text{s}/20 \approx 50\,\mu\text{s}$, which is beyond the ns-accurate timestamping. Note that the data chirps are decoded by analyzing their spectrograms, because the up and down chirps can be easily differentiated. Differently, preamble onset time estimation imposes more challenges than data decoding.

\begin{figure}
  \subfigure[$\theta = 0$]
  {
    \includegraphics[width=\columnwidth]{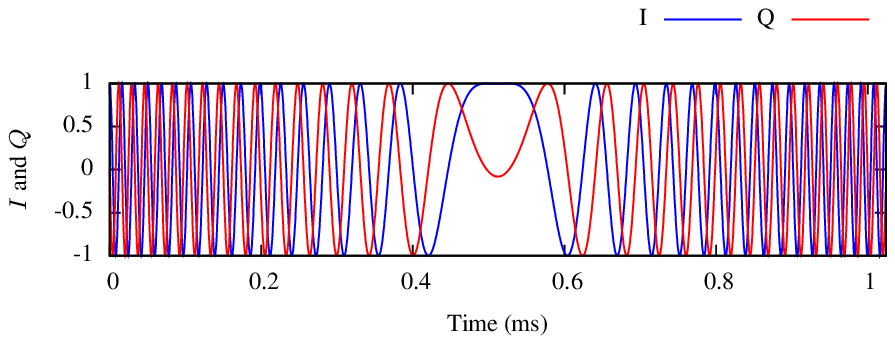}
    \label{fig:0_phase}
  }
  \subfigure[$\theta = \frac{\pi}{2}$]
  {
    \includegraphics[width=\columnwidth]{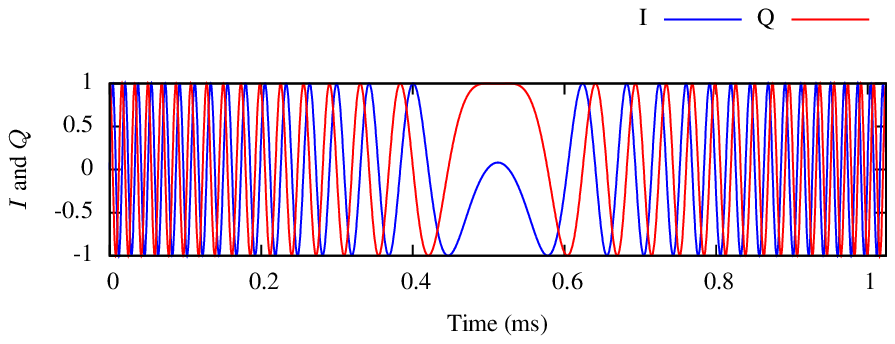}
    \label{fig:180_phase}
  }
  \caption{$I$ and $Q$ data of an up chirp under different $\theta$ settings. Note that $\theta$ is the difference between the initial phases of the carrier signals generated by the transmitter and the receiver.}
  \label{fig:diff_phase}
\end{figure}

\item  \textbf{Time domain timestamping:} Matched filter has been widely adopted for signal arrival time detection. The matched filter requires that the phase of an incoming signal is known to achieve the best detection accuracy. However, as LoRa adopts CSS modulation, the frequency will change with time, it is difficult for a receiver to lock the phase of an incoming signal. Fig.~\ref{fig:diff_phase} shows the ideal $I$ and $Q$ traces of two up chirps when the difference between the initial phases of the carrier signals at the transmitter and the receiver (denoted by $\theta$) are $0$ and $\frac{\pi}{2}$, respectively. We can see that the waveform shapes highly depend on the $\theta$. Since the $\theta$ is generally unknown and hard to estimate for CSS signals, it is difficult to define a template shape for the matched filter to work.
\end{itemize}
\subsection{Phase-Based Approach}

In this section, we will provide a detailed analysis on a possible phase-based approach to LoRa localization.

\begin{table*}
  \begin{minipage}{\textwidth}
    \begin{equation}
      \!\!\!\!\!\!\!\!\!\!\! s_I(t,d) = \frac{\alpha(d)A(t)}{2} \left( \cos \left( 2 \pi \int_{0}^{t-\frac{d}{v}} f(x) \dif x - 2 \pi f_ct + \theta_{\mathrm{Tx}}-\theta_{\mathrm{Rx}} \right) - \cos \left( 2 \pi \int_{0}^{t-\frac{d}{v}} f(x) \dif x + 2 \pi f_ct +\theta_{\mathrm{Tx}}+\theta_{\mathrm{Rx}} \right) \right).
      \label{eq:si}
    \end{equation}
    \begin{equation}
      s_Q(t,d) = \frac{\alpha(d)A(t)}{2} \left( \sin \left( 2 \pi \int_{0}^{t-\frac{d}{v}} f(x)dx - 2 \pi f_ct + \theta_{\mathrm{Tx}} - \theta_{\mathrm{Rx}} \right) + \sin \left( 2 \pi \int_{0}^{t-\frac{d}{v}} f(x)dx)+2 \pi f_ct +\theta_{\mathrm{Tx}}+\theta_{\mathrm{Rx}} \right) \right).
      \label{eq:sq1}
    \end{equation}
    \hrule
  \end{minipage}
\end{table*}

\subsubsection{Modeling of LoRa Chirp Signal Propagation}

As mentioned earlier, a chirp is a finite-time band-pass signal with time-varying frequency. Let $A(t)$ and $f(t)$ denote the instantaneous amplitude and frequency of the chirp signal at the time instant $t$. We formulate the chirp signal emitted by the transmitter from time $t=0$ as $s(t) = A(t) \sin \left( 2\pi \int_{0}^t f(x) \dif x+ \theta_{\mathrm{Tx}}  \right)$, where $\theta_{\mathrm{Tx}} \in [0, 2\pi)$ is the LoRa transmitter's initial phase at $t=0$. The $\theta_{\mathrm{Tx}}$ is usually unknown. As the signal emission starts from time $t=0$, we can make $A(t) = 0$ and $f(t) = 0$ when $t < 0$.
Now we derive the expression of the chirp signal at a position that is $d$ meters from the transmitter. Our analysis ignores the multi-path effect and barriers. The signal at the position, denoted by $s(t, d)$, can be expressed by
\begin{equation*}
  s(t,d)= \alpha(d)A \left(t-\frac{d}{v} \right) \sin \left( 2\pi \int_{0}^{t-\frac{d}{v}} f(x) \dif x+ \theta_{\mathrm{Tx}}  \right),
\end{equation*}
where $\alpha(d) \propto \frac{1}{d^2}$ denotes the attenuation coefficient, $v$ denotes the signal propagation velocity. Note that before the signal front arrives at the position (i.e., $t < \frac{d}{v}$), the signal $s(t, d) = 0$.

\subsubsection{Modeling of LoRa Chirp Signal Reception and a Differential Phase Sampling Technique}

This section models the reception of the LoRa chirp by a gateway implemented by a software-defined radio (SDR).
The SDR generates two unit-amplitude orthogonal carriers $\sin(2\pi f_ct + \theta_{\mathrm{Rx}})$ and $\cos(2 \pi f_c t + \theta_{\mathrm{Rx}})$, where $f_c$ is a specified frequency and $\theta_{\mathrm{Rx}}$ is the initial phase of the two self-generated carriers at $t=0$. The $\theta_{\mathrm{Rx}}$ is usually unknown. Assume the gateway is $d$ meters from the LoRa transmitter. The SDR will mix the received signal with the self-generated carriers, yielding $s_I(t, d) = s(t, d) \cdot \sin(2\pi f_ct + \theta_{\mathrm{Rx}})$ and $s_Q(t, d) = s(t, d) \cdot \cos(2\pi f_ct + \theta_{\mathrm{Rx}})$. The $s_I(t, d)$ and $s_Q(t, d)$ can be further derived as Eq.~(\ref{eq:si}) and Eq.~(\ref{eq:sq1}), respectively. The SDR will apply two internal low-pass filters to remove the high-frequency components in Eq.~(\ref{eq:si}) and Eq.~(\ref{eq:sq1}). Thus, after the low-pass filtering, the $I$ and $Q$ components yielded by the SDR, denoted by $I(t,d)$ and $Q(t,d)$, are given by $I(t,d) = \frac{\alpha(d)A(t)}{2} \cos \Theta(t,d)$, $Q(t,d) = \frac{\alpha(d)A(t)}{2} \sin \Theta(t,d)$, where $\Theta(t,d) =2 \pi \int_{0}^{t-\frac{d}{v}} f(x) \dif x - 2 \pi f_ct + \theta$ and $\theta = \theta_{\mathrm{Tx}}-\theta_{\mathrm{Rx}}$. From the above analysis, the instantaneous phase $\Theta(t, d)$ can be computed by $\Theta(t,d)= \mathrm{atan2} ( Q(t,d), I(t,d) ) + 2k \pi$ where $k \in \mathbb{Z}$. The $k$ rectifies the multi-valued inverse tangent function $\mathrm{atan2}(\cdot, \cdot)\in (-\pi, \pi)$ to an unlimited value domain and ensures that $\Theta(t, d)$ is a continuous function of $t$. Note that under a differential phase sampling (DPS) scheme that will be discussed shortly, the value of $k$ can be easily determined.

We now discuss the DPS scheme. Let $f_s$ denote the sampling rate of the SDR's analog-to-digital converter (ADC). Denote by $\{\Theta[t,d],\Theta[t+\frac{1}{f_s},d],\Theta[t+\frac{2}{f_s},d],...,\Theta[t+\frac{n}{f_s},d]\}$ a sequence of instantaneous phases computed by the sampled $I$ and $Q$ values. We define the DPS sequence $\Delta$ starting from time $t$ with its $i$th element $\Delta[i]$ given by
\begin{align}
 \Delta[i] &= \Theta \left[ t+\frac{i}{f_s},d \right] - \Theta \left[ t+\frac{i+1}{f_s},d \right] \nonumber \\
         &= -2 \pi \int_{t+\frac{i}{f_s}-\frac{d}{v}}^{t+\frac{i+1}{f_s}-\frac{d}{v}} f(x) \dif x+2 \pi f_c\frac{1}{f_s}.
         \label{eq:si2}
\end{align}
The bounds of $\Delta[i]$ are analyzed as follows. The LoRa chirp's instantaneous frequency $f(t) \in \left[f_c-\frac{BW}{2},f_c+\frac{BW}{2} \right]$. Thus, from Eq.~(\ref{eq:si2}), we have $- \frac{BW}{f_s}\pi \le \Delta[i] \leq \frac{BW}{f_s}\pi$.

\begin{prop}
  $f_s > BW$ is a sufficient condition for computing the sequence $\Delta$ unambiguously based on the $I$ and $Q$ traces.
\end{prop}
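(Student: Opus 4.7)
The plan is to reduce the claim to a short phase-unwrapping argument. The obstacle is the $2k\pi$ ambiguity in $\Theta(t,d) = \mathrm{atan2}(Q(t,d),I(t,d)) + 2k\pi$: the principal branch of $\mathrm{atan2}$ returns values in $(-\pi,\pi)$, so without extra information the integer $k$ at each sampling instant is indeterminate. I would show that the envelope $|\Delta[i]| \le \pi BW/f_s$ already derived from Eq.~(\ref{eq:si2}), combined with the strict hypothesis $f_s > BW$, pins down the ambiguity uniquely.

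First I would introduce the sample-level principal-value phases $\phi_i := \mathrm{atan2}(Q(t+i/f_s,d),I(t+i/f_s,d)) \in (-\pi,\pi)$ and the directly computable raw difference $\delta_i := \phi_i - \phi_{i+1}$, which lies in $(-2\pi,2\pi)$. Because $\Theta$ agrees with $\mathrm{atan2}(Q,I)$ up to an integer multiple of $2\pi$ at every sample, the true increment satisfies $\Delta[i] = \delta_i + 2m_i\pi$ for some $m_i \in \mathbb{Z}$, so the task reduces to identifying $m_i$.

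Then I would close the argument using the a priori bound. Under $f_s > BW$ the envelope $|\Delta[i]| \le \pi BW/f_s$ strengthens to $|\Delta[i]| < \pi$, placing $\Delta[i]$ strictly inside $(-\pi,\pi)$. Since any two elements of the coset $\delta_i + 2\pi\mathbb{Z}$ are separated by at least $2\pi$, the interval $(-\pi,\pi)$ contains at most one representative, which must be $\Delta[i]$. Thus $m_i$ is uniquely determined, with $m_i \in \{-1,0,1\}$ selected so that $\delta_i + 2m_i\pi \in (-\pi,\pi)$: take $m_i=1$ if $\delta_i < -\pi$, $m_i=-1$ if $\delta_i > \pi$, and $m_i=0$ otherwise. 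I do not expect a serious obstacle; the only subtlety worth flagging is that strictness of $f_s > BW$ is essential, since the boundary case $f_s = BW$ would allow $|\Delta[i]| = \pi$ and revive ambiguity, as both $+\pi$ and $-\pi$ would then be admissible representatives in the closed coset.
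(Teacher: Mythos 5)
Your proposal is correct and follows essentially the same route as the paper's proof: both use the bound $|\Delta[i]| \le \pi BW/f_s < \pi$ together with the fact that the raw $\mathrm{atan2}$ difference lies in $(-2\pi,2\pi)$ to show the wrap-correction integer is uniquely determined. Your coset-separation argument merely makes explicit the step the paper dismisses as ``easily verified,'' and your remark on the boundary case $f_s = BW$ is a sound observation consistent with the paper's strict inequality.
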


\begin{proof}
We consider computing a $\Delta$ element based on two consecutive $\Theta$ values $\Theta_1$ and $\Theta_2$, i.e., $\Delta = \Theta_1 - \Theta_2$. With $f_s > BW$, the range of $\Delta$ is $-\pi < \Delta < \pi$. Denote $\Theta_1 = \mathrm{atan2}(Q_1, I_1) + 2k_1 \pi$ and $\Theta_2 = \mathrm{atan2}(Q_2, I_2) + 2k_2\pi$, where $k_1$ is known and $k_2$ is to be determined. Since $-2\pi < \mathrm{atan2}(Q_1, I_1) - \mathrm{atan2}(Q_2, I_2) < 2\pi$, if $k_2 \le k_1 - 2$ or $k_2 \ge k_1 + 2$, $\Delta$ will be out of its range $(-\pi, \pi)$. Thus, there is only three possible cases: i) $k_2 = k_1 - 1$, ii) $k_2 = k_1$, and iii) $k_2 = k_1 + 1$. It can be easily verified that there is a case satisfying $\Delta \in (-\pi, \pi)$ and other two cases must not satisfy $\Delta \in (-\pi, \pi)$. The satisfying case determines the value of $k_2$. Therefore, the elements of $\Delta$ can be determined sequentially without ambiguity.
\end{proof}

For the EU868 frequency band, $BW = 125\,\text{KHz}$. The sampling rate of SDR's ADC (e.g., $20\,\text{Msps}$) is much higher than $BW$. Thus, the condition $f_s > BW$ can be satisfied.


\subsubsection{Distance Difference Estimation and End Device Localization}

The phase-based approach measures the differences among the phases of the gateways' received signals to estimate the differences among the distances between the end device and the gateways. We assume that the clocks of the gateways are tightly synchronized. Note that the clock synchronization between the end device and the gateway is not required. Denote by $d_A$ and $d_B$ the distances from the end device to the gateways $A$ and $B$, respectively. Our following analysis focuses on the estimation of $d_A - d_B$. This approach can be applied to estimate other distance differences, which are then together used by multilateration to localize the end device. The DPS sequence on gateway $A$ is $\Delta_A$, where $\Delta_A[i] = -2 \pi \int_{t+\frac{i}{f_s}-\frac{d_A}{v}}^{t+\frac{i+1}{f_s}-\frac{d_A}{v}} f(x) \dif x+2 \pi f_c\frac{1}{f_s}$. Denote by $F(x)$ the antiderivative of $f(x)$. The DPS sequence on gateway $A$ can be expressed as $\Delta_A[i]= -2 \pi \left( F \left( t-\frac{d_A}{v}+\frac{i+1}{f_s} \right)-F \left( t-\frac{d_A}{v}+\frac{i}{f_s} \right) \right ) + 2 \pi f_c\frac{1}{f_s}$. As the variable in this time sequence is $t-\frac{d_A}{v}$, we can denote the sequence as $\Delta_A = S(t-\frac{d_A}{v})$, where $S(t-\frac{d_A}{v})[i]= -2\pi \left(F\left(t-\frac{d_A}{v}+\frac{i+1}{f_s}\right)-F\left(t-\frac{d_A}{v}+\frac{i}{f_s}\right)\right) + 2 \pi f_c\frac{1}{f_s}$. The DPS sequence on gateway $B$ can be written as $\Delta_B = S(t-\frac{d_B}{v})$. Thus, $\Delta_B = S(t-\frac{d_B}{v})= S(t-\frac{d_A}{v} + (\frac{d_A}{v}-\frac{d_B}{v} ))$. It means that $\Delta_B$ is a time-shifted version of $\Delta_A$, where the time shift is $\frac{d_A}{v}-\frac{d_B}{v}$. Therefore, we can compute the cross-correlation of the two time sequences to estimate time shift. Assuming $K$ to be the peak position of cross-correlation, we have $K \frac{1}{f_s} \approx \frac{d_A}{v}-\frac{d_B}{v}$. The approximation is because $K$ has to be an integer. Thus, the distance difference $d_A - d_B$ can be estimated as $\hat{d}_{AB} = \frac{K \cdot v}{f_s}$.

\subsubsection{Challenges Faced by Phase-Based Approach}
First, the approach needs a high sampling rate $f_s$ to improve the localization accuracy. The resolution of the distance difference estimation is $\frac{v}{f_s}$. It means that if we use $20\,\text{Msps}$ ADC, the resolution is about $15\,\text{m}$. If we use $1\,\text{Gsps}$ ADC, the resolution is about $0.3\,\text{m}$. Future research shall investigate whether upsampling can improve the resolution.

Second, the $I$ and $Q$ traces generally contain noises. After long-distance propagation and barrier penetration, the high noise levels may significantly affect computing $\Theta$ using $\mathrm{atan2}(Q, I)$. DPS scheme robust to noises is to be developed.

Third, our analysis assumes that the end device and the gateways can generate accurate carrier frequency. The frequency bias of the SDR and the end device in generating the carrier signals may affect the localization performance. To measure the impact of temperature change on frequency bias, we conduct a test over 24 hours. We put a transmitter in the corridor of a multistory building that air condition cannot reach and a receiver in a homothermal indoor environment. Fig.~\ref{fig:temperature-bias} shows that the estimated frequency bias changes with temperature. The red circle dot line represents the temperature measured by a temperature sensor in the outdoor environment. The blue square dot line represents the estimation error. Future research shall investigate how the frequency biases affect the distance difference estimation and develop mitigation approaches.

\begin{figure}
  \includegraphics{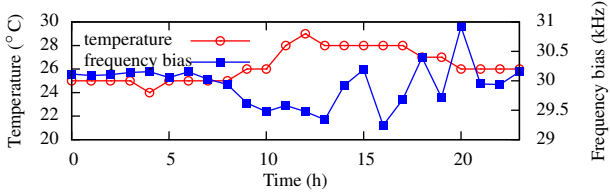}
  \caption{Frequency bias along with temperature over 24 hours in outdoor environment.}
  \label{fig:temperature-bias}
\end{figure}

Finally, the clock synchronization errors among the gateways may effect the accuracy of the phase-based approach. If there exists an unknown synchronization error $\Delta t$ between the gateways $A$ and $B$, the estimated distance difference $\hat{d}_{AB} = (K+\Delta tf_s)\frac{v}{f_s}$. Thus, the resolution is $\max\{1,\Delta tf_s\} \cdot \frac{v}{f_s}$. If we use $1\,\text{Gsps}$ ADC and aim to maintain the resolution at about $0.3\,\text{m}$, we need to synchronize the clocks of the gateways within $1\,\text{ns}$ error. Nanosecond-level clock synchronization is often challenging on commodity platforms.
\section{Conclusion and Future Work}
\label{sec:conclude}
This paper discusses the technical challenges faced by several commonly used techniques in implementing LoRa-based localization. The RSSI-based and TDOA-based approaches do not seem promising due to basic limitations of the LoRa signal (e.g., significant signal attenuation after barrier penetration and smooth time-domain signal waveform) and current commodity LoRa hardware (e.g., low-resolution internal time counter). Our analysis shows that the phase-based approach is more promising but also pinpoint several relevant challenges. Our future work is to implement and evaluate the phase-based approach.


\section{Acknowledgments}

This research is supported by an NTU Start-up Grant.

\balance
\bibliographystyle{abbrv}
\bibliography{sigproc}  
\end{document}